\def\trversion{}
\newcommand{\iftr}[2]{\ifx\trversion\undefined{#2}\else{#1}\fi}
\newcommand{\ifshort}[2]{\ifx\trversion\undefined{#1}\else{#2}\fi}
\DeclareRobustCommand{\stirling}{\genfrac\{\}{0pt}{}}
\newcommand{\qprove}[2]{\rtyps{#1}{#2}}
\title{Exponential Automatic Amortized Resource Analysis%
  \thanks{%
    This article is based on research supported by DARPA under AA
    Contract FA8750-18-C-0092 and by the National Science Foundation
    under SaTC Award 1801369, SHF Award 1812876, and CAREER Award
    1845514.
    Any opinions, findings, and conclusions contained in this document
    are those of the authors and do not necessarily reflect the views of
    the sponsoring organizations.
}}
\titlerunning{Exponential AARA}
\author{David M. Kahn [\Letter] \and
Jan Hoffmann}
\institute{Carnegie Mellon University, Pittsburgh PA, USA
\\
\email{davidkah@cs.cmu.edu}
\url{cs.cmu.edu/~davidkah}
\\
\email{jhoffmann@cmu.edu}
\url{cs.cmu.edu/~janh} }
\authorrunning{D.\,M. Kahn and J. Hoffmann}
\begin{document}

\maketitle
\begin{abstract}
  Automatic amortized resource analysis (AARA) is a type-based
  technique for inferring concrete (non-asymptotic) bounds on a
  program's resource usage. Existing work on AARA has focused on
  bounds that are polynomial in the sizes of the inputs.
  This paper presents and extension of AARA to exponential bounds that
  preserves the benefits of the technique, such as compositionality
  and efficient type inference based on linear constraint solving.
  A key idea is the use of the Stirling numbers of the second kind as
  the basis of potential functions, which play the same role as the
  binomial coefficients in polynomial AARA.
  To formalize the similarities with the existing analyses, the paper
  presents a general methodology for AARA that is instantiated to the
  polynomial version, the exponential version, and a combined system
  with potential functions that are formed by products of Stirling
  numbers and binomial coefficients.
  The soundness of exponential AARA is proved with respect to an
  operational cost semantics and the analysis of representative
  example programs demonstrates the effectiveness of the new analysis.
  \vspace{-1ex}
  \keywords{Functional programming \and Resource consumption \and Quantitative analysis \and Amortized analysis \and Stirling numbers \and Exponential}
\end{abstract}

\section{Introduction}
\label{sec:Intro}

``Time is money'' is a phrase that also applies to executing software,
most directly in domains such as on-demand cloud computing and smart
contracts where execution comes with a explicit price tag.
In such domains, there is an increasing interest in formally analyzing
and certifying the precise resource usage of programs.  However, the cost
of formally verifying properties by hand is an obstacle to even
getting projects off the ground. For this reason, it would be
desirable if such resource analyses could be performed mostly
automatically, with reduced burden on the programmer.

Techniques and tools for automatic and semi-automatic resource
analysis have been extensively studied. The applied methods range from
deriving and analyzing recurrence
relations~\cite{Wegbreit75,Albert07,FloresH14,AlbertFR15,DannerLR15,KincaidBBR2017,ChatterjeeFG17,Kincaid18},
to abstract interpretation and static
analysis~\cite{GulwaniMC09,BlancHHK10,SinnZV14,MadhavanKK17}, to type
systems~\cite{LagoG11,CicekBGGH16,OOPSLA:WWC17}, to proof assistants
and program
logics~\cite{Atkey10,CarbonneauxHZ15,CarbonneauxHRS17,Radicek17,GueneauCP18,NipkowB19,MevelJP19},
to term
rewriting~\cite{AvanziniM13,AvanziniLM15,NoschinskiEG13}.
Many techniques focus on upper bounds on the worst-case bounds, but
average-case bounds~\cite{FlSaZi91,Kamin16,NgoCH17,Wang0GCQS19} and lower-bounds
have also been studied~\cite{AlbertGM12,FrohnNHBG16,NgoDFH16}.

In this paper, we extend automatic amortized resource analysis
(AARA) to cover \emph{exponential} worst-case bounds.
AARA is an effective type-based technique for deriving concrete
(non-asymptotic) worst-case bounds, in particular for functional
languages. It has been introduced by Hofmann and Jost~\cite{Jost03} to
derive \emph{linear bounds} on the heap-space usage of strict
first-order functional programs with lists.
Subsequently, AARA has been extended to programs with recursive types
and general resource metrics~\cite{Jost09}, higher-order
functions~\cite{Jost10}, lazy evaluation~\cite{VasconcelosJFH15},
parallel evaluation~\cite{HoffmannS15}, univariate polynomial
bounds~\cite{HoffmannH10}, multivariate polynomial
bounds~\cite{HoffmannAH10,HoffmannW15}, session-typed
concurrency~\cite{DasHP17}, and side
effects~\cite{LichtmanH17,NiuH18}.
However, none of the aforementioned works explores exponential bounds. 

The idea of AARA is to enrich types with numeric annotations that
represent coefficients in a potential function in the sense of
amortized analysis~\cite{tarjan85}. Bound inference is reduced to
Hindley-Milner type inference extended with linear constraints for the
numeric annotations. Advantages of the technique include
compositionality, efficient bound inference via off-the-shelf LP
solving, and the ability to derive bounds on the high-water mark for
non-monotone resources like memory.
A powerful innovation leveraged in polynomial AARA is the
representation of potential functions as non-negative linear
combinations of binomial coefficients. Their combinatorial identities
yield simple and local typing rules and support a natural semantic
understanding of types and bounds. Moreover, these potential functions
are more expressive than non-negative linear-combinations of the
standard polynomial basis.

However, polynomial potential is not always enough. Functional
languages make it particularly easy to use exponentially many
resources just by having two or more recursive calls. The following
function \texttt{subsetSum} $:$ \texttt{int list} $\to$ \texttt{int} $\to$
\texttt{bool} exemplifies this by naively solving the well-known
NP-complete problem subset sum. In the worst case, it performs
$3*2^{|nums|}-2$ Boolean and arithmetic operations (where $|x|$ gives the length of the list $x$).
\begin{lstlisting}
let subsetSum nums target =  
    match nums with
    | [] -> target = 0
    | hd::tl -> subsetSum tl (target-hd) || subsetSum tl target
\end{lstlisting}
Such a function could appear in a program with
polynomial resource usage if applied to arguments of
logarithmic size. In this case, polynomial AARA would not be able to
derive a bound. Section \ref{sec:exppoly} contains a relevant example.

To handle such functions, we introduce an extension to AARA that
allows working with potential functions of the form $f(n) = b^n$. This
extension exploits the combinatorial properties of \textit{Stirling
  numbers of the second kind} \cite{stirling} in much the same way that AARA currently
exploits those of binomial coefficients. Moreover, we allow both
multiplicative and additive mixtures of exponential and polynomial
potential functions. The techniques used in this process could easily
be applied to other potential functions in the future.

The paper first details a generalized AARA type system fit for reuse
between polynomial, exponential, and other potential functions. We
then instantiate this system with Stirling numbers of the second kind,
yielding the first AARA that can infer exponential resource
bounds. Finally, we pick out the characteristics that allow for mixing
different families of potential functions and maximizing the space
they express, and we instantiate the general system with products of
exponential and polynomial potential functions.
To focus on the main contribution, we develop the system for a simple
first-order language with lists in which resource usage is defined
with explicit \emph{tick} expressions. However, we are confident that
the results smoothly generalize to more general resource metrics,
recursive types, and higher-order functions. As in previous work, we
prove the soundness of the analysis with respect to a big-step cost
semantics that models the high-water mark of the resource usage.

\vspace{-2ex}
\section{Language and Cost Semantics}
\label{sec: Lang}
\vspace{-1.5ex}

\paragraph{Abstract Syntax}

To begin, we define an abstract binding tree (ABT, see~\cite{PFPL16}) underlying a simple strict first-order functional language. Expressions are in let-normal form to simplify the AARA typing rules. For code examples, however, we overlay the ABT with corresponding ML-based syntax. For example, $1 \!\! :: \!\! []$, $[1]$, and $\textit{cons}(1,\textit{nil})$ all represent the same list.

A program $\mathit{prog}$ is a collection of functions as defined in
the following grammar. The symbols $\mathit{lit}$, $\mathit{binop}$, and
$\mathit{unop}$ refer to standard literal values, binary operations,
and unary operations respectively, of $\mathit basic$ types
($\mathit{int}$, $\mathit{bool}$, etc.). The symbols $f$, $x$, and $r$
refer to function identifiers, variables, and rational numbers, respectively.
\vspace{-1ex}
\small
\begin{align*}
    \mathit{prog} ::= & \; \mathit{func}\{f\}(x.e) \; \mathit{prog} \mid \epsilon
    \\
    e ::=& \; \mathit{lit} \mid x \mid \mathit{binop}(x_1 ; x_2) \mid \mathit{unop}( x) \mid \mathit{app}\{f\}(x) \mid  \mathit{let}(e_1; x.e_2)
    \\
    &  \mid \mathit{share}(x_1; x_2,x_3.e) \mid \mathit{tick}\{r\} \mid \mathit{pair}(x_1; x_2) \mid \mathit{nil} \mid \mathit{cons}(x_1 ; x_2) 
    \\
    & \mid \mathit{cond}(x;e_1;e_2) \mid \mathit{pairMatch}(x_1; x_2,x_3.e) \mid \mathit{listMatch}(x_1; e_1; x_2,x_3.e_2)
\end{align*}
\normalsize
Expressions
include function applications, conditionals, and the usual introduction and elimination
forms for pairs and lists. They also include two special expressions: $\mathit{tick}\{r\}$ and $\mathit{share}$. The former, $\mathit{tick}\{r\}$, is used to specify constant resource cost $r$. We allow $r$ to be negative in the case of resources becoming available instead of being consumed. The latter, $\mathit{share}(x_1; x_2,x_3.e)$, provides two copies of its argument $x_1$ for use in $e$. This is useful because the affine features of the AARA type system do not allow naive variable reuse. In practice, $\mathit{share}$ can be left implicit by automatically preceding every variable usage by $\mathit{share}$. 

To focus on the technical novelties, we keep function identifiers and variables disjoint, that is, the types of variables do not contain arrow types and functions are first-order. Higher-order functions can be handled as in previous AARA literature \cite{HoffmannW15}. As a further simplification, we only let functions take one argument; multiple arguments can be simulated with nested pairs. Finally, the language here only supports the inductive types of lists; future work could extend this to more general types as in other AARA literature \cite{LichtmanH17,HoffmannW15,HoffmannS15JFP,HoffmannS13}.

\vspace{-2ex}
\paragraph{Operational Cost Semantics}

To define resource usage, AARA literature uses the operational big-step judgment $V \vdash e \Downarrow v \mid (q,q')$ (see e.g. \cite{Hoffmann11}) defined in Figure \ref{fig:cost}. This judgment means that, under the environment $V$, the expression $e$ evaluates to the value $v$ under some resource constraints given by the pair $q,q'$. The environment $V$ maps variables to values. The resource constraints are that $q$ is the high-water mark of resource usage, and $q-q'$ is the net amount of resources consumed during evaluation. In other words, if one started with exactly as many resources needed to evaluate $e$, that amount would be $q$, and the amount of leftover resources after evaluation would be $q'$. It is essential to track both of these values to model resources that might be returned after use, like space. Space usage usually has a positive high-water mark but no net resource consumption, as space could be reused.

The above big-step judgment only formalizes terminating evaluations. To deal with divergence, the additional judgment $V \vdash e \Downarrow \circ  \mid q$ has been introduced \cite{HoffmannH102}. This merely drops the parts of the previous judgment relevant to post-termination, focusing on partial evaluation. It means that some partial evaluation of $e$ uses a high-water mark of $q$ resources. Should it exist, the largest $q$ such that $V \vdash e \Downarrow \circ \mid q$ holds would be the high-water mark of resource usage across any partial evaluation of $e$. \ifshort{For a formal definition, see \cite{HoffmannH102}.}{The formal definition can be found in Figure \ref{fig:ntcost}.}
\begin{figure}[ht]
    \caption{Terminating operational cost semantics rules.}
    \label{fig:cost}
  \def \MathparLineskip {\lineskip=0.33cm}    
\begin{mathpar}
\fontsize{8}{12}
      	\infer[\mathit{Tick}]{
		V \vdash \mathit{tick}\{r\} \Downarrow () \mid (q,q') 
	}{
		q = max(r,0)
		&
		q' = max(-r,0)
	}

    \infer[\mathit{Binop}]{
        V \vdash \mathit{binop}(x_1, x_2) \Downarrow v \mid (0,0)
    }{
	\mathit{binop}(V(x_1),V(x_2)) \mapsto v
    }

    \infer[\!\! \mathit{Lit}]
    {
     	V \vdash \mathit{lit} \Downarrow \mathit{lit} \mid (0,0)
    }{
    }
    
    \infer[\!\!\mathit{Var}]{
        V \vdash x \Downarrow v \mid (0,0)
    }{
    	V(x) =v
    }

     \infer[\!\!\mathit{Pair}]{
        V \vdash \mathit{pair}(x_1, x_2) \Downarrow (v_1, v_2) \mid (0,0)
    }{
    	V(x_1) = v_1
	& 
	V(x_2) = v_2
    }

    \infer[\!\!\mathit{Unop}]{
        V \vdash \mathit{unop}(x) \Downarrow v \mid (0,0)
    }{
	\mathit{unop}(V(x)) \mapsto v 
    }

    \infer[\!\!\mathit{PMat}]{
    	V \vdash \mathit{pairMatch}(x_p; x_1,x_2.e) \Downarrow v \mid (q,q')
    }{
        V(x_p) = (v_1,v_2)
	& 
	V[x_1 {\mapsto} v_1, x_2 {\mapsto} v_2] \vdash e \Downarrow v \mid (q,q')
    }

    \infer[\mathit{Let}]{
        V \vdash \mathit{let}(e_1; x.e_2) \Downarrow v_2 \mid (q+max(p-q',0),p'+max(q'-p,0))
    }{
        V \vdash e_1 \Downarrow v_1 \mid (q,q')
        & 
        V[x \mapsto v_1] \vdash e_2 \Downarrow v_2 \mid (p,p')
    }
    
        \infer[\mathit{CondT}]{
            V \vdash \mathit{cond}(x_b; e_{t}; e_{f}) \Downarrow v \mid (q,q')
        }{
           	V(x_b) = \mathit{true}
		& 
		V \vdash e_t \Downarrow v \mid (q,q')
        }

        \infer[\mathit{CondF}]{
            V \vdash \mathit{cond}(x_b; e_{t}; e_{f}) \Downarrow v \mid (q,q')
        }{
           	V(x_b) = \mathit{false}
		& 
		V \vdash e_f \Downarrow v \mid (q,q')
	}

        \infer[\mathit{App}]{
            V \vdash \mathit{app}\{f\}(x) \Downarrow v  \mid (q,q')
        }{
            \mathit{func}\{f\}(x'.e) \in \mathit{prog} 
            &
            V(x) = v_x
	    &
            V[x' \mapsto v_x] \vdash e \Downarrow v \mid (q,q')
        }

        \infer[\!\!\mathit{LMat0}]{
           V \vdash \mathit{listMatch}(x; e_1; x_h,x_t.e_2)\Downarrow v \mid  (q,q')     
        }{
            V(x) = \mathit{nil}
            &
            V \vdash e_1 \Downarrow v \mid (q,q')
        }

        \infer[\!\!\mathit{Cons}]{
            V \vdash \mathit{cons}(x_h; x_t) \Downarrow v_h::v_t \mid (0,0)    
            }{
            	V(x_h) = v_h
		& 
		V(x_t) = v_t
        }

        \infer[\mathit{LMat1}]{
           V \vdash \mathit{listMatch}(x; e_1; x_h,x_t.e_2)\Downarrow v \mid  (q,q')     
        }{
            V(x) = v_h::v_t 
            &
            V[x_h \mapsto v_h, x_t \mapsto v_t] \vdash e_2 \Downarrow v \mid (q,q')
        }

        \infer[\mathit{Nil}]{
            V \vdash \mathit{nil} \Downarrow \mathit{nil} \mid (0,0)
        }{
        }

        \infer[\mathit{Share}]{
        	V \vdash \mathit{share}(x_1;x_2,x_3.e) \Downarrow v \mid (q,q')
        }{
        	V[x_2 \mapsto V(x_1), x_3 \mapsto V(x_1)] \vdash e \Downarrow v \mid (q,q')
        }
	\end{mathpar}
\end{figure}
\ifshort{}{
\begin{figure}[ht]
    \caption{Partial evaluation operational cost semantics rules.}
    \label{fig:ntcost}
    \fontsize{8}{10}
    \begin{mathpar}
    	\infer[\mathit{Partial}]{
		V \vdash e \Downarrow \circ \mid 0
	}{
	}

	\infer[\mathit{Termination}]{
		V \vdash e \Downarrow \circ \mid q
	}{
		V \vdash e \Downarrow v \mid (q,q')
	}

    \infer[\mathit{Let1}]{
        V \vdash \mathit{let}(e_1; x.e_2) \Downarrow \circ \mid q
    }{
        V \vdash e_1 \Downarrow \circ \mid q
    }

    \infer[\mathit{Let2}]{
        V \vdash \mathit{let}(e_1; x.e_2) \Downarrow \circ \mid q + \textit{max}(p-q',0)
    }{
        V \vdash e_1 \Downarrow v_1 \mid (q,q')
        & 
        V[x \mapsto v_1] \vdash e_2 \Downarrow \circ \mid p
    }

    \infer[\mathit{PMat}]{
    	V \vdash \mathit{pairMatch}(x; x_1,x_2.e) \Downarrow \circ \mid q
    }{
    	V(x) = (v_1, v_2)
	&
        V[x_1\mapsto v_1, x_2\mapsto v_2] \vdash e \Downarrow \circ \mid q
    }

        \infer[\mathit{CondT}]{
            V \vdash \mathit{cond}(x; e_{t}; e_{f}) \Downarrow  \circ \mid q
        }{
           	V(x) = \mathit{true}
		& 
		V \vdash e_t \Downarrow \circ \mid q
        }

        \infer[\mathit{CondF}]{
            V \vdash \mathit{cond}(x; e_{t}; e_{f}) \Downarrow  \circ \mid q
        }{
           	V(x) = \mathit{false}
		& 
		V \vdash e_f \Downarrow \circ \mid q
        }

        \infer[\mathit{App}]{
            V \vdash \mathit{app}\{f\}(x) \Downarrow \circ \mid  q
        }{
            \mathit{func}\{f\}(x'.e) \in \mathit{prog}
            &
            V(x) = v
            &
            V[x' \mapsto v] \vdash e \Downarrow \circ \mid  q
        }

        \infer[\mathit{LMat0}]{
           V \vdash \mathit{listMatch}(x; e_1; x_h,x_t.e_2)\Downarrow \circ \mid q 
        }{
        	   V(x) = \mathit{nil}
	    &
            V \vdash e_l \Downarrow \circ \mid q
        }

        \infer[\mathit{LMat1}]{
           V \vdash \mathit{listMatch}(x; e_1; x_h,x_t.e_2)\Downarrow  \circ \mid q  
        }{
            V(x) = v_h::v_t
            &
            V[x_h \mapsto v_h, x_t \mapsto v_t] \vdash e_2 \Downarrow \circ \mid q
        }

    	\infer[\mathit{Tick}]{
		V \vdash \mathit{tick}\{r\} \Downarrow \circ \mid \mathit{max}(r,0)
	}{
	}
	\end{mathpar}
\end{figure}
}

\vspace{-2ex}
\section{Automatic Amortized Resource Analysis}
\vspace{-1.5ex}

Here we lay out a generalized version of the AARA system with the potential functions abstracted. Existing AARA literature is specialized to polynomial functions (see e.g. \cite{HoffmannH10}). This existing polynomial system may be obtained as an instantiation, as may the exponential system that we introduce in Section \ref{sec: exppot}.

AARA uses the \textit{potential} (or physicist's) method to account for resource use, as is commonly used in amortized analyses. The potential method uses the physical analogy of converting between potential and actual energy that can be used to perform work. Whereas a physicist might find potential in the chemical bonds of a fuel, however, AARA places it in the constructors of lists.

To prime intuition with an example, consider paying a resource for each $::$ operation performed in the following code. It performs $\mathit snoc$, which is like $\mathit cons$ but adds onto the back of the list rather than the front.
\begin{lstlisting}[mathescape=true]
let snoc x xs = 
    match xs with
    | [] -> tick 1; x::[]	                (* pay 1 resource here *)
    | hd::tl -> tick 1; hd::(snoc x tl)	(* pay 1 resource here *)
\end{lstlisting}

The resource consumption of $\mathit{snoc}\; x\; xs$ as defined by the $\mathit tick$ expressions is $1 + |xs|$.
Using the potential method, we can justify this bound as follows.
If 1 resource is initially available, then the base case of the empty list can be paid for. 
If there is $1$ stored per element of the list then $1$ resource is released in the cons case of the pattern match.
This suffices to pay for the additional $::$ operation.
The remaining potential on $xs$ can be assigned to $tl$ for the recursive call.
One can sum these costs to infer that the initial potential $1 + |\mathit{xs}|$ covers the cost of all the $::$ operations. The AARA type system could describe this with the typing $L^1(\mathbb Z)$ for $\mathit xs$ (describing the linear potential in the superscript) and $\mathbb Z \times L^1(\mathbb Z) \stackrel {1/0}{\rightarrow} L^0(\mathbb Z)$ for $\mathit snoc$ (describing the initial/remaining resources above the arrow).
Another valid type is $\mathbb Z \times L^2(\mathbb Z) \stackrel {1/0}{\rightarrow} L^1(\mathbb Z)$, which could be used
in a context where the result of $\mathit snoc$ must be used to pay for additional cost.

\vspace{-2ex}
\paragraph{Types}

The AARA system laid out here supports the types given below. The symbol $F$ gives the types of functions, where $q$ and $q'$ are non-negative rationals. The symbol $S$ gives the remaining non-function types, where $\mathit{basic}$ stands for the basic types like $\mathit{int}$ or $\mathit{unit}$, and the resource annotation $P$ is an indexed family of rationals representing the coefficients in a linear combination of basic potential functions. 
\vspace{-1.5ex}
\begin{align*}
    F ::= \;  S \stackrel{q/q'}{\rightarrow} S 
    & &
    S ::= \; \mathit{basic} \mid L^P(S) \mid S\times S
\end{align*}
The typing rules for these types are given in Figure \ref{fig: type rules} and explained in the following sections. The values of these types are the usual values.

\vspace{-2ex}
\paragraph{Potential}

To understand typing rules, it is necessary to define potential. The following potential constructs are generalized from  polynomial AARA work \cite{HoffmannH10}.

As mentioned, $P = (p_i)_{i \in I}$ is in $\mathbb{Q}^I$ as an indexed family of rationals. Each entry represents a coefficient in a linear combination of basic potential functions. This linearity makes it natural to overload the type of $P$ as a vector or matrix of rationals, so it is treated as such whenever the context is appropriate. Finally, let those basic potential functions be fixed as some family $(f_i)_{i\in I}$, where $f_i(0)=0$.

We define the potential represented with $P$ using the function $\phi$ where
\vspace{-.5ex} 
\begin{center}
$\phi(n, P) = \sum_{i} p_i \cdot f_i(n)\;.$  
\end{center}
\vspace{-.5ex} 
The function $\phi$ yields the total potential on a list (excluding the potential of its elements) as a function of the list's size $n$ and its potential annotation $P$.

We can then relate resource potential between different sizes of list with the shift operator $\lhd: \mathbb{Q}^I \rightarrow \mathbb{Q}^I$ and constant difference operator $\delta:\mathbb{Q}^I \rightarrow \mathbb{Q}$. These functions need only satisfy the following property equation.
\vspace{-.5ex} 
\begin{equation}
  \label{eq:shift}
\phi(n+1, P) = \delta(P) + \phi(n, \lhd P)
\end{equation}
\vspace{-.5ex} 
Though we leave open the explicit definition of these functions for generality, we only later work with instances of them that are linear operators, such that Equation~\ref{eq:shift} denotes a linear recurrence. Such a refinement leaves $\lhd P$ and $\delta(P)$ linear functions of $P$.

These functions come in handy for understanding the stored potential in a value of a given type, defined by the potential function $\Phi$ as follows. 
\vspace{-.5ex} 
\begin{align*}
        \Phi(v:\mathit{basic}) &= 0 
        \\
        \Phi((v_1,v_2):A_1 \times A_2) &= \Phi(v_1:A_1) + \Phi(v_2:A_2)
        \\
        \Phi([]:L^P(A)) &= 0
        \\
        \Phi(h::t : L^P(A)) &= \delta(P) + \Phi(h:A) + \Phi(t:L^{\lhd P}(A))
    \end{align*}

We often need to measure the potential across an entire evaluation context of typed values $V:\Gamma$ given by a typing context $\Gamma$ and variable bindings $V$. We do so by extending the definition of potential $\Phi$ as follows.
\vspace{-.5ex} 
\begin{center}
    $\Phi(\emptyset) = 0 \hspace{4em} \Phi(V : (\Gamma,v : A)) = \Phi(V:\Gamma) + \Phi(v:A)$
\end{center}

Finally, we can use these definitions to obtain a closed-form expression for the potential over an entire list (including its elements) with the following:

\begin{lemma}
    \label{thm: closed pot}
    Let $l=[a_n,...,a_1]$ be a list of $n$ values. Then $
        \Phi(l:L^P(A)) = \phi(n,P) + \sum_{i=1}^{n}\Phi(a_i:A)$
\end{lemma}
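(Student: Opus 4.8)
The plan is to induct on $n$, the length of the list $l$, keeping the potential annotation $P$ universally quantified so that the induction hypothesis can be reinstantiated at a shifted annotation. This is the standard structural-induction template for reasoning about $\Phi$ on lists, and the definition of $\Phi$ on $\mathit{cons}$ is already written in exactly the form that feeds such an induction.

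For the base case $n = 0$ the list is $[]$. The definition of $\Phi$ gives $\Phi([] : L^P(A)) = 0$; on the right-hand side the empty sum vanishes and $\phi(0, P) = \sum_i p_i \cdot f_i(0) = 0$ because $f_i(0) = 0$ for every $i$ by assumption. Hence both sides are $0$. For the inductive step, assume the statement holds for every annotation at length $n$, and write $l = [a_{n+1}, \dots, a_1] = a_{n+1} :: t$ with $t = [a_n, \dots, a_1]$. Unfolding the $\mathit{cons}$ clause of the definition of $\Phi$ yields $\Phi(l : L^P(A)) = \delta(P) + \Phi(a_{n+1} : A) + \Phi(t : L^{\lhd P}(A))$. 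Applying the induction hypothesis to $t$ with the annotation $\lhd P$ rewrites $\Phi(t : L^{\lhd P}(A))$ as $\phi(n, \lhd P) + \sum_{i=1}^{n} \Phi(a_i : A)$. Substituting and regrouping, $\Phi(l : L^P(A)) = \bigl(\delta(P) + \phi(n, \lhd P)\bigr) + \Phi(a_{n+1} : A) + \sum_{i=1}^{n} \Phi(a_i : A)$. The key step is to recognize the parenthesized term: by the defining property~\eqref{eq:shift} of the shift and constant-difference operators, $\delta(P) + \phi(n, \lhd P) = \phi(n+1, P)$. This collapses the expression to $\phi(n+1, P) + \sum_{i=1}^{n+1} \Phi(a_i : A)$, which is exactly the claim at length $n+1$.

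There is no real obstacle here; the only points that require care are (i) setting up the induction so that $P$ stays general, since the recursive occurrence of $\Phi$ uses the annotation $\lhd P$ rather than $P$, and (ii) invoking the two defining equations — $f_i(0) = 0$ in the base case, and Equation~\eqref{eq:shift} in the inductive step — at precisely the right places. Everything else is routine bookkeeping of finite sums, and in particular the proof never needs the explicit form of $\lhd$, $\delta$, or the $f_i$, only their stated abstract properties, which is consistent with the generality the section is aiming for.
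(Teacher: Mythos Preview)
Your proof is correct and follows essentially the same approach as the paper's own argument: structural induction on the list, using $f_i(0)=0$ for the base case and Equation~\eqref{eq:shift} to collapse $\delta(P)+\phi(n,\lhd P)$ in the inductive step. Your write-up is slightly more explicit about keeping $P$ universally quantified so the hypothesis can be reinstantiated at $\lhd P$, but the paper relies on this implicitly and the structure of the two proofs is otherwise identical.
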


\begin{proof}
    We induct over the structure of the list $l$.
    
    For the empty list of length 0:
    \vspace{-1ex}
        $$\Phi([]:L^P(A)) = 0 ={\scriptstyle\sum}_i p_i \cdot f_i(0) = \phi(0,P) + {\scriptstyle\sum}_{i=1}^{0}\Phi(a_i:A)$$
    \vspace{-4ex}
    
    For $l = h::t$ of size $n+1$: 
    \vspace{-1ex}
    \begin{align*}
        \Phi(a_{n+1}::b : L^P(A)) &= \delta( P) + \Phi(a_{n+1}:A) + \Phi(l':L^{\lhd P}(A))
        \\
        &= \delta(P) + \Phi(a_{n+1}:A) + \phi(n,\lhd P) + {\scriptstyle\sum}_{i=1}^{n}\Phi(a_i:A) 
        \\
        &= \phi(n+1,P) +{\scriptstyle\sum}_{i=1}^{n+1}\Phi(a_i:A) 
    \end{align*}
\end{proof}

We can apply Lemma \ref{thm: closed pot} to the previously defined function $\mathit snoc$ to see the change in potential between input and output. This difference in potential should bound the resources consumed. For this case, the basic potential functions $(f_i)$ only need contain $\lambda n.n$, and we can let $\lhd (p) = p = \delta((p))$. Letting $y$ be the result of $snoc \; x \; xs$, the type $\mathbb{Z} \times L^1(\mathbb{Z}) \stackrel {1/0}{\rightarrow} L^0(\mathbb{Z})$ indicates the following bound  
$$\Phi(x:\mathbb{Z}, xs:L^1(\mathbb{Z})) + 1 - \Phi(y : L^0(\mathbb{Z})) = \phi(|xs|, 1) + 1 - \phi(|y|, 0) = |xs| + 1 $$

This is exactly the amount of resources consumed, so the bound is tight.

In this work we only consider so-called \textit{univariate} potential, wherein every term in the potential sum is dependent on the length of only one input list. However, different univariate potential summands may depend on different inputs, and thus univariate potential may still be multivariate. The term \textit{multivariate potential} refers to using more general multivariate functions for potential. There is existent work on multivariate potential using polynomial functions \cite{HoffmannAH12}. We expect that the work here extends to multivariate potential similarly. 

\begin{figure}[p]
    \centering
    \caption{AARA typing rules.}
    \label{fig: type rules}
\def \MathparLineskip {\lineskip=0.33cm}    
\begin{mathpar}
\fontsize{8}{10}
	\textbf{Basic rules:}
	\\
    \infer[\mathit{Lit}]
    {
        \Sigma; \emptyset \qprove 0 0 \mathit{lit}:\mathit{basic}
    }{
    }

    \infer[\mathit{Let}]{
        \Sigma; \Gamma_1,\Gamma_2 \qprove q {q'} \mathit{let}(e_1; x.e_2):B
    }{
        \Sigma; \Gamma_1 \qprove q {p} e_1:A 
        & 
        \Sigma; \Gamma_2, x:A \qprove p {q'} e_2:B
    }

    \infer[ \mathit{Unop}]{
        \Sigma; x:basic \qprove 0 0 \mathit{unop}(x): \mathit{basic'}
    }{
    }

     \infer[\mathit{Binop}]{
        \Sigma; x_i:\mathit{basic} \qprove 0 0 \mathit{binop}(x_1, x_2): \mathit{basic'}
    }{
    }

    \infer[\mathit{Var}]{
        \Sigma; x:A \qprove 0 0 x:A
    }{
    }

    \infer[\mathit{Pair}]{
        \Sigma; x_1:A_1, x_2:A_2 \qprove 0 0 \mathit{pair}(x_1,x_2):A_1\times A_2
    }{
    }

    \infer[\mathit{PMat}]{
        \Sigma;\Gamma, x:A_1\times A_2 \qprove q {q'} \mathit{pairMatch}(x; x_1,x_2.e):B
    }{
        \Sigma;\Gamma, x_1:A_1,x_2:A_2 \qprove q {q'} e:B
    }

        \infer[\mathit{Cond}]{
            \Sigma; \Gamma, x:\mathit{bool}\qprove q {q'} \mathit{cond}(x; e_1; e_2):A
        }{
            \Sigma; \Gamma, x:\mathit{bool} \qprove q {q'}e_1:A
            &
            \Sigma; \Gamma, x:\mathit{bool} \qprove q {q'} e_2:A
        }
        \\
        \\
	\textbf{Function \;rules:}
	\\
        \infer[\mathit{App}]{
            \Sigma; x:A \qprove q {q'} \mathit{app}\{f\}(x) : B
        }{
            A \stackrel{q/q'}{\rightarrow} B \in \Sigma(f)
        }

        \infer[\mathit{Fun}]{
            A \stackrel{q/q'}{\rightarrow} B \in \Sigma(f)
        }{
            \mathit{func}\{f\}(x.e) \in \mathit{prog} 
            &
            \Sigma; x:A \qprove q {q'} e:B
        }
	\\
	\\
	\textbf{Potential-focused\;rules:}
	\\
        \infer[\mathit{Tick}]{
            \Sigma; \Gamma \qprove{max(r,0)} {max(-r,0)} \mathit{tick}\{r\}:unit
        }{
        }
     
        \infer[\mathit{Relax}]{
            \Sigma; \Gamma \qprove q {q'} e:A
        }{
            \Sigma; \Gamma \qprove p {p'} e:A
            &
            q \geq p
            &
            q-p \geq q'-p'
        }

        \infer[\mathit{SubWeakL}]{
            \Sigma;\Gamma, x:A' \qprove q {q'} e:B
        }{
            \Sigma;\Gamma, x:A \qprove q {q'} e:B
            &
            A' <: A
        }

        \infer[\mathit{SubWeakR}]{
            \Sigma;\Gamma \qprove q {q'} e:A
        }{
            \Sigma;\Gamma \qprove q {q'} e:A'
            &
            A' <: A
        }

        \infer[\mathit{Sharing}]{
            \Sigma;\Gamma, x_1:A_1 \qprove q {q'} \mathit{share}(x_1; x_2,x_3.e):B
        }{
            \Sigma; \Gamma, x_2:A_2, x_3:A_3 \qprove q {q'} e:B
            &
            A_1 \curlyvee (A_2, A_3)
        }
        \\
        \\
        \textbf{List\;rules:}
        \\
        \infer[\mathit{Nil}]{
            \Sigma; \emptyset \qprove 0 0 \mathit{nil} : L^P(A)
        }{
        }

        \infer[\mathit{Cons}]{
            \Sigma; x_h:A, x_t:L^{\lhd P}(A) \qprove {\delta(P) } 0 \mathit{cons}(x_h; x_t): L^P(A)
        }{
        }

        \infer[\mathit{ListMatch}]{
            \Sigma;\Gamma, x:L^P(A) \qprove q {q'} \mathit{listMatch}(x; e_1; x_h,x_t.e_2):B
        }{
            \Sigma;\Gamma \qprove q {q'} e_1:B 
            &
            \Sigma; \Gamma, x_h:A, x_t: L^{\lhd P}(A) \qprove {q+\delta(P)} {q'} e_2:B
        }
     \end{mathpar}
 \end{figure}

 \begin{figure}[ht]
    \caption{AARA subtyping and sharing judgments.}
    \label{fig:judgments}
\def \MathparLineskip {\lineskip=0.2cm}    
\begin{mathpar}
\fontsize{8}{10}
        \infer[\mathit{Subtype}]{
            L^P(A) <: L^Q(A)
        }{
            \forall i. p_i \geq q_i
        }

        \infer[\mathit{ShareBasic}]{
            \mathit{basic} \curlyvee(\mathit{basic}, \mathit{basic})
        }{
        }

        \infer[\mathit{SharePair}]{
            A_1 \times B_1 \curlyvee (A_2\times B_2,A_3\times B_3)
        }{
            A_1 \curlyvee (A_2, A_3)
            & 
            B_1 \curlyvee (B_2, B_3)
        }
      
         \infer[\mathit{ShareList}]{
            L^P(A_1) \curlyvee (L^Q(A_2), L^R(A_3))
        }{
            A_1\curlyvee (A_2, A_3)
            &
            P=Q+R
        }
\vspace{-5ex}
    \end{mathpar}
\end{figure}

\vspace{-2ex}
\paragraph{Typing Rules}

The typing rules in Figure \ref{fig: type rules} use the judgment $\Sigma; \Gamma \qprove q {q'} e:A$. In this typing judgment, $\Gamma$ maps variables to types, while $\Sigma$ maps function labels to sets of types. This judgment holds when, in the typing environment given by $\Sigma$ and $\Gamma$, the expression $e$ is of type $A$, subject to the constraints that $q$ and $q'$ are the amount of available resources before and after some evaluation of $e$. Unlike the judgment $V \vdash e \Downarrow v \mid (q,q')$, these values need not be tight. 

By expressing available resources on the turnstile, and potential resources in the types given by $\Sigma,\Gamma$, and $A$, the type system is set up to formalize the reasoning of the potential method. Theorem \ref{thm: sound} shows that it is sound with respect to the operational semantics of Section \ref{sec: Lang}. 

Many typing rules preserve the total resource potential they are given, consuming none of it themselves. They therefore usually either have no explicit interaction with potential (e.g. $\mathit{Lit}$) or pass around exactly what they are given (e.g. $\mathit{Let}$). All basic rules in the first block of Figure \ref{fig: type rules} fit this characterization.

The typing rules concerning functions in second block of Figure \ref{fig: type rules} are the only to make use of $\Sigma$. For each function $f$ defined in $\mathit prog$ via $\mathit func\{f\}(x.e)$, $\Sigma(f)$ refers to the set of types that its body $e$ could be given. That we allow for sets of types is important because recursive calls to a function may not always make use of a type with the same resource annotations; this is called \textit{resource-polymorphic recursion}. Despite these rules capturing the intuition behind typing resource-polymorphic recursion, they are not used in existing implementation, as they lead to infinite type derivations. Nonetheless there exists an effective way to type resource-polymorphic recursion with a finite derivation; see \cite{HoffmannH102}. In the examples provided in this article, it usually suffices to consider only \textit{resource-monomorphic recursion}, wherein inner and outer calls use the same annotation.

All of the rules discussed so far are simply those of existing AARA literature with their parameter for operation cost set to 0 (see e.g. \cite{HoffmannH10}). This does not change their generality, as such constant cost can (and could already in prior work) be simulated using $\mathit{tick}$. Similarly, non-constant costs could be simulated by running helper functions using $\mathit{tick}$ the appropriate number of times. 

The remaining rules cover sharing, subtype-weakening, and the rules concerning lists. Weakening, though not listed, is also allowed.

Sharing is a form of contraction. By sharing, the rest of the typing rules can become affine, allowing only single usages of a given variable. Intuitively, sharing is meant to prevent duplicating potential across multiple usages of a variable, and instead split the potential across them. The rules for the sharing judgment, indicating how to split potential, can be found in Figure \ref{fig:judgments}. Note that the rule $\textit{ShareList}$ adds indexed collections of rationals; this should be interpreted pointwise, as if the addends were vectors or matrices.

Subtype-weakening is a form of subtyping based on potential. It discards potential on a list, weakening the upper bound on resources it represents. This rule follows all usual subtyping rules, as well as $\mathit {Subtype}$ from Figure \ref{fig:judgments}. Relaxing behaves similarly, but loosens the bounds on the available resources instead.

The intuition for the rules concerning lists in the last block of Figure \ref{fig: type rules} is that total resources should be conserved between constructions and destructions. Because $\delta(P)$ expresses the difference in potential, it is exactly how many resource units are released after a pattern match on a list of type $L^P(A)$. For the same reason, it is also how many need to be stored when reversing the process and putting an element on a list of type $L^{\lhd P}(A)$. Finally, when a list is empty, it has no room to store potential. Every potential function $f_i$ maps 0 to 0, so an empty list can safely be assigned any scalar of zero potential.

\vspace{-2ex}
\paragraph{Soundness}

The soundness of the type system is expressed with the following theorem. It states that the evaluation of an expression $e$ does not require more resources than initially present, and (should evaluation terminate) it leaves at least as many resource as dictated. The proof is a straightforward generalization of the version from \cite{HoffmannH10}, but we nonetheless reproduce the proof below. 

\vspace{-1.5ex} 
\begin{theorem}
	\label{thm: sound}
    Let $\Sigma; \Gamma \qprove q {q'} e:B$ and $V$ provide the variable bindings for $\Gamma$    \begin{enumerate}
        \item If $V \vdash e \Downarrow v \mid (p,p')$ then $p \leq \Phi(V:\Gamma) +q$ and $p-p' \leq \Phi(V:\Gamma) + q - \Phi(v:B) - q'$
        
        \item If $V \vdash e \Downarrow \circ \mid p$ then $p \leq \Phi(V:\Gamma) +q$ 
    \end{enumerate}
\end{theorem}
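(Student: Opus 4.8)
The plan is to prove both statements simultaneously by induction on the derivation of the cost judgment $V \vdash e \Downarrow v \mid (p,p')$ (respectively $V \vdash e \Downarrow \circ \mid p$), doing case analysis on the last rule of the operational derivation and, for each case, inverting the typing derivation $\Sigma; \Gamma \qprove q {q'} e:B$ as far as needed. Because the type system has non-syntax-directed rules ($\mathit{Relax}$, $\mathit{SubWeakL}$, $\mathit{SubWeakR}$, weakening, and $\mathit{Sharing}$ when $\mathit{share}$ is implicit), I would first establish, or at least explicitly invoke, the usual structural lemmas: that $\mathit{Relax}$ compositions can be collapsed into a single application, that subtyping $A' <: A$ only decreases potential ($\Phi(v:A') \geq \Phi(v:A)$, which follows from $\mathit{Subtype}$ and the definition of $\Phi$ together with Lemma~\ref{thm: closed pot}), and a sharing lemma stating that if $A_1 \curlyvee (A_2,A_3)$ then $\Phi(v:A_1) = \Phi(v:A_2) + \Phi(v:A_3)$ (again from the rules in Figure~\ref{fig:judgments}, the equation $P = Q + R$, and Lemma~\ref{thm: closed pot}, by induction on the sharing derivation). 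These let me push all the non-structural rules to the leaves and reduce to the syntax-directed cases.

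For the syntax-directed cases, the two inequalities in part~(1) are exactly the arithmetic bookkeeping of the potential method. For $\mathit{Tick}\{r\}$: the operational rule gives $p = \max(r,0)$, $p' = \max(-r,0)$, the typing rule gives $q = \max(r,0)$, $q' = \max(-r,0)$, and $\Phi(V:\Gamma) \geq 0$, so both inequalities hold (with the net-cost one becoming $p - p' = r = q - q'$ plus nonnegative slack). The zero-cost leaf rules ($\mathit{Lit}$, $\mathit{Var}$, $\mathit{Binop}$, $\mathit{Unop}$, $\mathit{Pair}$, $\mathit{Nil}$) are immediate since $p = p' = 0$ and $\Phi$ of the result is bounded appropriately ($\Phi(\mathit{lit}:\mathit{basic}) = 0$, etc.). The list rules are where Lemma~\ref{thm: closed pot} and Equation~\eqref{eq:shift} do the real work: for $\mathit{Cons}$, $\Phi(v_h::v_t : L^P(A)) = \delta(P) + \Phi(v_h:A) + \Phi(v_t:L^{\lhd P}(A))$ by definition of $\Phi$, and the typing rule supplies exactly $\delta(P)$ units on the turnstile to match; for $\mathit{ListMatch}$ in the cons branch, destructing the list releases $\delta(P)$ into the available resources, matching the premise $q + \delta(P)$. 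The compositional cases $\mathit{Let}$, $\mathit{App}$, $\mathit{PMat}$, $\mathit{Cond}$, $\mathit{Share}$ follow by applying the IH to subderivations and threading the potential/resource inequalities through; $\mathit{Let}$ is the fussiest because its operational rule uses $\max$-expressions $q + \max(p-q',0)$ and $p' + \max(q'-p,0)$ to splice the high-water marks of $e_1$ and $e_2$, so I would handle the two cases $p \geq q'$ and $p < q'$ separately, in each verifying the high-water-mark bound and the net-cost bound from the two IH instances plus the fact that potential is nonnegative.

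For part~(2), the non-terminating judgment, I would induct on its derivation, using the rules in Figure~\ref{fig:ntcost}. The $\mathit{Partial}$ rule gives $p = 0 \leq \Phi(V:\Gamma) + q$ since both summands are nonnegative. The $\mathit{Termination}$ rule reduces directly to part~(1). The propagation rules ($\mathit{Let1}$, $\mathit{Let2}$, $\mathit{PMat}$, $\mathit{CondT/F}$, $\mathit{App}$, $\mathit{LMat0/1}$, $\mathit{Tick}$) each either apply the part-(2) IH to a subexpression directly, or — in the $\mathit{Let2}$ case — combine the part-(1) bound on $e_1$ (giving $q' \geq \Phi(V:\Gamma) + q - \Phi(v_1:A) - p_1'$ after renaming, so the leftover potential available to $e_2$ is at least $\Phi(V[x\mapsto v_1]:(\Gamma_2,x:A)) + p$) with the part-(2) IH on $e_2$; the high-water mark $q + \max(p - q', 0)$ is then bounded exactly as in the terminating $\mathit{Let}$ case but without needing the net-cost inequality.

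The main obstacle I expect is the $\mathit{Let}$ case (and its $\mathit{Let2}$ analogue), specifically getting the $\max$-based high-water-mark arithmetic to line up with the two IH instances when the context must be split as $\Gamma = \Gamma_1, \Gamma_2$ and $\Phi$ distributes additively over the split — one must carefully track that the potential $\Phi(V:\Gamma_1)$ consumed-or-freed by $e_1$ leaves precisely $\Phi(V:\Gamma_2) + \Phi(v_1:A) + (\text{leftover from } e_1)$ available when entering $e_2$, and then do the case split on the sign of $p - q'$. A secondary subtlety is making sure the treatment of the non-syntax-directed rules is airtight: in particular that $\mathit{Relax}$'s two side conditions $q \geq p$ and $q - p \geq q' - p'$ are exactly what is needed to preserve both inequalities, and that $\mathit{SubWeakR}$/subtyping on the result type only makes $\Phi(v:B)$ smaller (which weakens the net-cost conclusion in the correct direction). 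Everything else is routine potential-method accounting that Lemma~\ref{thm: closed pot} and Equation~\eqref{eq:shift} have been set up to make local.
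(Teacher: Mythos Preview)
Your proposal is correct and covers the same substantive content as the paper's proof: the same per-rule potential accounting, the same use of Equation~\eqref{eq:shift} and the definition of $\Phi$ for the list rules, the same sharing/subtyping potential lemmas, and the same $\max$-arithmetic for $\mathit{Let}$. The only organizational difference is that the paper structures the argument as a nested induction on the typing derivation and the operational derivation together, casing on the last \emph{typing} rule (so the structural rules $\mathit{Relax}$, $\mathit{SubWeakL/R}$, weakening are just further inductive cases in which the operational derivation is unchanged and the typing derivation shrinks), whereas you induct on the operational derivation, case on the last \emph{operational} rule, and pre-process the typing derivation via separate inversion/structural lemmas to strip off the non-syntax-directed rules. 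Both packagings are standard and yield the same proof obligations; your version makes the structural properties explicit as reusable lemmas, while the paper's keeps everything inside one induction.
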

\vspace{-2.5ex} 

\begin{proof}
Assume $V$ binds $\Gamma$'s variables and perform nested induction on the type derivation and operational judgment for an expression in let-normal form. We show the induction below only for the terminating operational judgment cases, but the partial-evaluation cases are nearly identical.

(\textbf{Base Non-Cons}) Suppose the last rule applied in the typing derivation is any non-$\mathit{Cons}$ base case, i.e., $\mathit{Lit}$, $\mathit{Var}$, $\mathit{Unop}$, $\mathit{Binop}$, $\mathit{Pair}$, $\mathit{Nil}$, or $\mathit{Tick}$. Then assume the appropriate terminating operational judgment rule applies. In such a case, one finds $p \leq q$, $p' \geq q'$, and $\Phi(v:B)=\Phi(V:\Gamma)$. This and the non-negativity of potential are sufficient to satisfy the desired inequalities.

(\textbf{Base Cons}) Suppose the last rule is $\mathit{Cons}$, so $q=\delta(P)$ and $q'=0$. Assume the $\mathit{Cons}$ operational judgment applies, so that $p = p' = 0$. Note $\Phi(v_h::v_t : L^P)$ is equal to $\delta(P) + \Phi(v_h:A) + \Phi(v_t: L^P(A))$ by definition. This identity and the non-negativity of potential satisfy the desired inequalities.

(\textbf{Step Implicit Inequalities}) Suppose the last rule is one of $\mathit{SubWeakL}$, $\mathit{SubWeakR}$, $\mathit{Relax}$, or substructural weakening, and assume some operational judgment applies. Each typing requires a similar typing judgment as a premiss. Further, none changes any values, so the same operational judgment still applies. Thus, the inductive hypothesis applies, and gives almost the inequalities we need. Each case provides the inequalities needed to finish. For subtype-weakening, it is sufficient note that $C <: D$ entails $\Phi(v:C) \geq \Phi(v:D)$, since $C$ is pointwise greater-then-or-equal to $D$. For $\mathit{relax}$, the premisses of the $\mathit{relax}$ rule directly include the inequalities needed to complete the case. And we can complete the substructural weakening case by noting that the non-negativity of potential entails $\Phi(V:\Gamma,v:A) \geq \Phi(V:\Gamma)$. 

(\textbf{Step Let}) Suppose the last rule is $\mathit{Let}$, and suppose its operational judgment applies. The premisses of the typing rule require that $\Sigma; \Gamma_1 \qprove q {r} e_1:A$ and $\Sigma; \Gamma_2, x:A \qprove r {q'} e_2:B$. The premisses of the operational judgment require that $V \vdash e_1 \Downarrow v_1 \mid (s,s')$ and $V[x \mapsto v_1] \vdash e_2 \Downarrow v_2 \mid (t,t')$, where $p = s+max(t-s',0)$ and $p'=t'+max(s'-t,0)$. Applying the inductive hypothesis to these premiss pairs and adding the resulting inequalities cancels terms to complete the case.

(\textbf{Step Sharing}) Suppose the last is $\mathit{Sharing}$, so that $\Gamma = \Gamma',x_1:A_1$. It requires as a premiss that $\Sigma; \Gamma', x_2:A_2, x_3:A_3 \qprove q {q'} e:B$, where $A_1 \curlyvee (A_2, A_3)$. Assuming the operational judgment $\mathit{Share}$ applies, $V[x_2 \mapsto V(x_1), x_3 \mapsto V(x_1)] \vdash e \Downarrow v \mid (p,p')$ also holds. The inductive hypothesis applies, yielding the needed inequalities, but for $x_2,x_3$ instead of $x_1$. However, the sharing relation ensures that $\Phi(v_1:A_1) = \Phi(v_2:A_2,v_3:A_3)$, and this identity finishes the case.

(\textbf{Step ListMatch}) Suppose the last is $\mathit{ListMatch}$, so $\Gamma = \Gamma',x:L^P(A)$. There are two operational judgments which could apply: $\mathit{LMat0}$ and $\mathit{LMat1}$.

Suppose the former judgment applies. It requires that $V \vdash e_1 \Downarrow v \mid (p,p')$. At the same time, the $\mathit{ListMatch}$ rule requires as a premiss that $\Sigma; \Gamma' \qprove q {q'} e_1:B$. The inductive hypothesis applies, yielding the needed inequalities, but for $\Gamma'$ instead of $\Gamma$. However, because $\Phi(nil:L^P(A))=0$, we see $\Phi(V:\Gamma') = \Phi(V:\Gamma)$, and the desired inequalities result.

Suppose instead the latter judgment applies. This judgment requires as a premiss that $V[x_h \mapsto v_h, x_t \mapsto v_t] \vdash e_2 \Downarrow v \mid (p,p')$. At the same time, the $\mathit{ListMatch}$ rule requires that $\Sigma; \Gamma', x_h :A, x_t :L^{\lhd P}(A) \qprove {q+\delta(P)} {q'} e_2:B$. The inductive hypothesis applies, telling us that  $p-p' \leq \Phi(V:\Gamma', v_h:A, v_t:L^{\lhd P}(A)) + q+\delta(P) - \Phi(v:B) - q'$ and $p \leq \Phi(V:\Gamma', v_h:A, v_t:L^{\lhd P}(A)) +q+\delta(P)$. By definition, $\Phi(v_h::v_t : L^P) = \delta(P) + \Phi(v_h:A) + \Phi(v_t: L^P(A))$, and applying this identity to the inequalities yields the inequalities needed.

(\textbf{Step Cond}) Suppose the last rule is $\mathit{Cond}$, and that either of the $\mathit{CondT}$ or $\mathit{CondF}$ operational judgments apply. In either case, applying the inductive hypothesis to its premiss and the premiss of $\mathit{Cond}$ gives the needed inequalities.

(\textbf{Step PMat}) Suppose that the last rule applied is $\mathit{PMat}$, so that $\Gamma = \Gamma',x:A_1 \times A_2$. This rule would require as a premiss that $\Sigma; \Gamma', x_1:A_1, x_2:A_2 \qprove q {q'} e' :B$, for $e'$ the body of the match statement $e$. Suppose the $\mathit{PMat}$ operational judgment applies. This judgment requires as a premiss that $V[x_1 \mapsto v_1, x_2 \mapsto v_2] \vdash e' \Downarrow v \mid (p,p')$, where the value of $x$ is $(v_1,v_2)$. Applying the inductive hypothesis to these premisses followed by the definitional identity $\Phi((v_1,v_2):A_1 \times A_2) = \Phi(v_1:A_1) + \Phi(v_2:A_2)$ completes the case.

(\textbf{Step App}) Suppose the last rule is $\mathit{App}$. Note that this rule requires $\mathit{Fun}$ as a premiss, which in turn requires $\Sigma; x:A \qprove q {q'} e' :B$ where $e'$ is the body of the function being applied. If the $\mathit{App}$ operational judgment applies, its premiss would require $V[x' \mapsto V(x)] \vdash e \Downarrow v \mid (p,p')$. Although $e'$ might not be a smaller expression than $e$, the operational judgment derivation still shrinks.  This means the inductive hypothesis applies, and it gives the exact inequalities needed.

\end{proof}

\vspace{-2ex}
\paragraph{Type Inference}

Type inference for the Hindley-Milner part of the type system is decidable \cite{Hindley,Milner78atheory}. The only new barrier for automating inference in AARA is obtaining witnesses for all the coefficients in each annotation $P$ in a derivation. 

Each typing rule naturally gives a set of linear constraints on the entries of $P$. If the relation given by $\lhd$ and $\delta$ can likewise be expressed with linear constraints, then all such constraints are linear. So long as $|P|$ is finite, this forms a linear program. A linear program solver can then find minimal witnesses efficiently.

Existing AARA literature (see e.g. \cite{HoffmannH10}), however, uses binomial coefficients as the basis functions for $P$, of which there are infinitely many. This nonetheless works because only a particular finite prefix of their set, $\binom - 1, \dots , \binom - k$, are used as a basis in a given analysis. Each such prefix basis also yields the same locally-definable shift operation: the linear equality $\lhd p_i = p_i + p_{i+1}$, where $p_k$ is the coefficient of $\binom - k$ and is 0 if the function is outside the prefix. As this is a linear relation, and each prefix is finite, inference can be performed via linear program. The prefix bases of binomial coefficients thereby form an infinite family of finite bases, each of which allows automated inference of resource polynomials up to a fixed degree in the AARA system.

As a caveat, not all programs use resources in a manner compatible with the AARA system. Indeed, it is undecidable whether or not a program uses e.g. polynomial amounts of resources, as this could solve the halting problem.

\vspace{-2ex}
\section{Exponential Potential}
\label{sec: exppot}
\vspace{-1.5ex}

Stirling numbers of the second kind $\stirling n k = \frac 1 {k!} \sum_{i=0}^k (-1)^i \binom k i (k-i)^n$ count the number of ways to form a $k$-partition of a set of $n$ elements. These can be used to express exponential potential functions similarly to how binomial coefficients can express polynomial ones. In particular, we make use of Stirling numbers with arguments $n,k$ offset by 1, $\stirling {n+1} {k+1}$, so that $\phi(n,P) = \sum_i p_i \cdot \stirling {n+1}{i+1}$. While other bases could also express exponential potential, these offset Stirling numbers have a few particularly desirable properties, which are described in this section.

\vspace{-2ex}
\paragraph{Simple Shift Operation}

Like binomial coefficients, the prefixes of the basis of the offset Stirling numbers of the second kind form an infinite family of finite bases, each of which allows automated inference in the AARA system. However, these potential functions are exponential rather than polynomial.

Stirling numbers of the second kind satisfy the recurrence $\stirling {n+1} {k+1} = (k+1)\stirling n {k+1} + \stirling n k$. This recurrence allows the $\lhd$ operation to have the same local definition for every annotation entry in every prefix basis: $\lhd p_i = (i+1)p_i + p_{i+1}$, where $p_k$ is the coefficient of $\stirling {n+1}{k+1}$, and is 0 if the function index is outside the chosen prefix. Given this definition for $\lhd$ and letting $\delta(P) = p_0$, we find $p_0 + \sum_i \lhd p_i \stirling {n+1}{i+1} = \sum_i p_i \stirling{n+2}{i+1}$, satisfying Equation~\ref{eq:shift}.

This shift operation yields a linear relation, as the coefficient of a given $p_i$ is a constant scalar. Thus, exactly like when using binomial coefficients, inference is automatable via linear programming. Certain other exponential bases, like Gaussian binomial coefficients, could be similarly automated.

\vspace{-2ex}
\paragraph{Expressivity}

Because $\stirling {n+1}{k+1} = \frac 1 {k!} \sum_{i=0}^k (-1)^{k-i}\binom k i (i+1)^n \in \Theta((k+1)^n)$, the offset Stirling numbers of the second kind can form a linear basis for the space of sums of exponential functions. Each function $\lambda n.b^n$ with $b\geq 1$ can be expressed as a linear combination of the functions $\lambda n.\stirling {n+1}{k+1}$.

The function $\lambda n.\stirling {n+1}{k+1}$ is also non-negative for natural $n$, and non-decreasing with respect to $n$. These are two natural properties to require of basic potential functions, since amortized analysis requires non-negative resources, and larger inputs should not usually become cheaper to process. Further, the properties are preserved by non-negative linear (i.e. conical) combination, and by $\lhd$ when defined with a non-negative linear recurrence - the combinations given by $P$ and $\lhd P$ always satisfy the two potential function properties.

Ensuring these properties for more general potential functions requires determining if such a function on a natural domain is always non-negative. This is non-trivial. In the existing literature on multivariate polynomials, we find this is \textit{undecidable} in the worst case \cite{matiyasevich}. However, restricting to non-negative linear (that is, \textit{conical}) combinations of non-negative, non-decreasing functions - as we have done here - gives simple linear constraints that ensure both desired properties. For finite bases, this is easily handled via linear programming. 

When considering expressivity in this conical combination model of potential functions, one finds some otherwise-valid potential functions are not be expressible in the conical space given by the offset Stirling number functions. Nonetheless, Stirling number functions are a \textit{maximally expressive} basis; it is not possible to express additional potential functions using a different basis without losing expressibility elsewhere. Notably, the standard exponential basis is \textit{not} maximal in this sense. The formal statement of such maximal expressivity is generalized in the theorem below. Any finite, sequential subset of the offset Stirling number functions satisfy the prerequisites of this theorem, as do the binomial coefficient functions and other well-known functions like the Gaussian polynomials.

\begin{theorem}
    \label{thm:max}
    Let $\{f_i\}$ be a finite set of linearly independent functions on the naturals that are non-negative and non-decreasing. Let $f_i(n)$ be 0 until $n \geq i$, and let $i \leq j$ imply that $O(f_i )\subseteq O(f_j)$, with asymptotic equality only when $i=j$. Let $L$ be the linear span (collection of linear combinations) of $\{f_i\}$, and let $C$ be its conical span (collection of conical combinations). 
    
    There does not exist another linearly independent basis $\{g_i\}$ with linear span $L$ and conical span $D \supsetneq C$ such that each function in $\{g_i\}$ is non-negative and non-decreasing. That is, $\{f_i\}$ has a maximally expressive conical span.
\end{theorem}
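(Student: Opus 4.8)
The plan is to reformulate the claim in terms of the change-of-basis matrix between $\{f_i\}$ and a hypothetical competitor basis, and then to argue by induction on $m := |\{f_i\}|$, peeling off the slowest-growing function at each step. Concretely, suppose for contradiction that $\{g_i\}$ is a basis of $L$ consisting of non-negative non-decreasing functions with conical span $D \supsetneq C$. Since $D \supseteq C$, each $f_i$ is a non-negative combination $f_i = \sum_j A_{ij} g_j$ with $A_{ij} \ge 0$, and $A = (A_{ij})$ is invertible as both families are bases. Writing $B := A^{-1}$, so $g_j = \sum_i B_{ji} f_i$, it is enough to show $B \ge 0$: then each $g_j \in C$, so $D \subseteq C$, contradicting $D \supsetneq C$. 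Thus the theorem reduces to the assertion: \emph{if $A \ge 0$ is invertible, $f_i = \sum_j A_{ij} g_j$, and every $g_j = \sum_i B_{ji} f_i$ with $B = A^{-1}$ is non-negative and non-decreasing, then $B \ge 0$}. The base case $m = 1$ is immediate ($g_1 = c f_1$ with $c > 0$ by non-negativity), so the content is the inductive step.

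For the inductive step I would extract two independent pieces of information about the slowest-growing function $f_1$. First, a sign fact: by hypothesis $f_1(1) > 0$ while $f_i(1) = 0$ for every $i \ge 2$, so the evaluation functional $e : h \mapsto h(1)$ is non-negative on non-negative functions and satisfies $e(f_1) = f_1(1) > 0$, $e(f_i) = 0$ ($i \ge 2$); applying $e$ to $g_j = \sum_i B_{ji} f_i$ gives $0 \le g_j(1) = B_{j1} f_1(1)$, so the entire first column of $B$ is non-negative. Second, a growth fact: since $f_1 = \sum_j A_{1j} g_j \ne 0$, some $A_{1j} > 0$, and then $0 \le A_{1j} g_j \le f_1$ pointwise (all $g_k \ge 0$), so $g_j = O(f_1)$. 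If the largest index $M$ with $B_{jM} \ne 0$ in $g_j = \sum_i B_{ji} f_i$ were $\ge 2$, then both $g_j$ and $\sum_{i < M} B_{ji} f_i$ would be $O(f_{M-1})$ (using $O(f_1) \subseteq O(f_{M-1})$ and $f_i = O(f_{M-1})$ for $i \le M-1$), forcing $B_{jM} f_M = g_j - \sum_{i<M} B_{ji} f_i$ to be $O(f_{M-1})$ and hence $O(f_M) \subseteq O(f_{M-1})$, which is impossible since the inclusions among the $f_i$ are strict. So $M = 1$ and $g_j$ is a positive multiple of $f_1$; by linear independence of $\{g_k\}$ exactly one $g_j$ can be such a multiple — reindex it to $g_1$ — whence $A_{11} > 0$ and $A_{1j} = 0$ for all $j \ge 2$.

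Now the problem decouples. The first row of $A$ is $(A_{11}, 0, \dots, 0)$, so $B = A^{-1}$ has first row $(A_{11}^{-1}, 0, \dots, 0)$, giving $B_{11} > 0$; and $AB = I$ yields, for $j \ge 2$, $0 = \sum_\ell A_{j\ell} B_{\ell 1}$, a sum of non-negative terms (by $A \ge 0$ and the first column of $B$ being non-negative), so its $\ell = 1$ summand $A_{j1} B_{11}$ vanishes and $A_{j1} = 0$. Hence $A$, and therefore $B$, is block-diagonal, $A = A_{11} \oplus A'$ and $B = A_{11}^{-1} \oplus (A')^{-1}$, so each $g_j$ with $j \ge 2$ lies in $V := \operatorname{span}(f_2, \dots, f_m)$. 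In the $(m-1)$-dimensional space $V$ the functions $f_2, \dots, f_m$ again satisfy the theorem's hypotheses, the $g_2, \dots, g_m$ form a basis of $V$ by non-negative non-decreasing functions, and $f_i = \sum_{j \ge 2} A_{ij} g_j$ with $A_{ij} \ge 0$ shows $\operatorname{cone}(g_2, \dots, g_m) \supseteq \operatorname{cone}(f_2, \dots, f_m)$; the induction hypothesis makes this an equality, i.e.\ $(A')^{-1} \ge 0$, and with $B_{11} > 0$ this gives $B \ge 0$, closing the induction.

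I expect the main obstacle to be the peeling step, and specifically the realization that the two pieces of information must be combined: the evaluation functional alone only controls one column of $B$ — and in fact the set of \emph{all} non-negative non-decreasing functions in $L$ is, in general, strictly larger than $C$ (for instance $\binom{n}{1} - \binom{n}{2} + \binom{n}{3}$ is non-negative and non-decreasing on $\mathbb{N}$ yet is not a conical combination of $\binom{n}{1}, \binom{n}{2}, \binom{n}{3}$), so $C$ cannot be characterized as that set and this really is a maximality statement — while the growth argument alone only pins down that $f_1$ must be reproduced exactly by some $g_j$; only together do they force the change-of-basis matrix to split off its first coordinate so the induction can proceed. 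A secondary point to nail down is the precise meaning of ``$f_i$ is $0$ until $n \ge i$''; I am relying on $f_1(1) > 0$ (and, in the recursion, on $f_k(k) > 0$ with the other functions vanishing at $n = k$), which holds for the offset Stirling numbers, the binomial coefficients, and the Gaussian polynomials.
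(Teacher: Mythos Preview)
Your proof is correct and takes a genuinely different route from the paper's. The paper argues by direct contradiction: it fixes some $g_k \notin C$, writes $g_k = \sum_i \alpha_i f_i$, and chooses $g_k$ so that the smallest index $m$ with $\alpha_m < 0$ is minimal over all basis elements; evaluating at $n = m$ forces some earlier coefficient $\alpha_p$ with $p < m$ to be positive. Separately it matches growth rates to get a bijection $g_i \in \Theta(f_i)$, then expands $f_k$ conically in the $g_j$'s and reads off that some other $g_j$ must carry a negative coefficient on $f_p$, contradicting the minimality of $m$. Your argument instead peels off the slowest function by induction on the dimension, combining the evaluation-at-$1$ trick (first column of $B$ non-negative) with the sandwich $0 \le A_{1j} g_j \le f_1$ (forcing one $g_j$ to be a scalar multiple of $f_1$) to split $A$ into a $1 \times 1$ block and an $(m-1)\times(m-1)$ block. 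The ingredients---pointwise evaluation at the first nonzero argument and asymptotic comparison---are the same, but you assemble them into a structural statement about the change-of-basis matrix rather than a minimal-counterexample argument. Your approach is a bit longer but makes the mechanism more transparent (and, as a by-product, since it ends with both $A \ge 0$ and $A^{-1} \ge 0$, it shows any competing basis is a monomial transform of $\{f_i\}$). Your flagged assumption that $f_i(i) > 0$ is indeed needed and is used just as tacitly in the paper's proof when it asserts ``$\alpha_m f_m(m)$ is negative''. Your aside that $\binom{n}{1} - \binom{n}{2} + \binom{n}{3}$ is non-negative and non-decreasing yet outside $C$ is a nice observation clarifying why this is really a maximality statement and not a characterization of all admissible potential functions in $L$.
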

 \vspace{-2ex}
\begin{proof}
    Suppose there is such a basis $\{g_i\}$. We express each basis $\{f_i\}$ and $\{g_i\}$  with linear combinations of the other, and derive a contradiction.
    
    If there is any function in the conical span $D$ of $\{g_i\}$ that is not in $C$, then this is the case for some basis function $g_k$. Because $g_k \in L$, it can be written as a linear combination of $\{f_i\}$; let $\sum_i \alpha_i f_i = g_k$. Because $g_k \not \in C$, there is at least one coefficient $\alpha_i < 0$; let it be $\alpha_m$. In case there are multiple candidate elements $g_k$, pick $g_k$ to be the basis function such that this index $m$ is minimized.
    
    We then see that $
        g_k(m) = \sum_i \alpha_i f_i(m) = (\sum_{i < m}\alpha_i f_i(m)) + \alpha_m f_m(m)$ because $f_i(m)$ for $i>m$ is 0. This yields two observations: First, $m<k$, as otherwise the fastest-growing term of $g_k$ would be negative, but $g_k$ is never negative. Second, the term $\alpha_m f_m(m)$ is negative, yet $g_k \geq 0$, so it must be that $\sum_{i < m}\alpha_i f_i(m) > 0$. Thus there exists a coefficient $\alpha_p > 0$ where $p < m$. 
    
    Now we look at representing $\{f_i\}$ with $\{g_i\}$. Because the conical span $D$ contains $C$, it can represent each $f_i$ as a conical combination. Notably, a given $f_i$ cannot be represented only with functions outside of $\Omega (f_i)$, nor any function outside of $O(f_i)$, due to growth rates. There is therefore at least one function in $\{g_i\}$ that is $\Theta(f_i)$, for each $i$. Since the linear span of these corresponding $g_i$ already has the same (finite) dimension as $L$, any additional functions would not be linearly independent. Due to this, we can say $g_i \in \Theta(f_i)$ uniquely for each $i$. 
        
    Take $f_k$ in particular as a conical combination of $\{g_i\}$. We now consider replacing each element of $\{g_i\}$ in that conical combination with its equivalent linear combination of elements of $\{f_i\}$. Because of the above correspondence of growth rates, there must be a positive coefficient for $g_k$. Because $g_k$ has positive weight $\alpha_p$ on $f_p$ where $p<m<k$, another basis function $g_i$ in the conical combination must have negative weight on $f_p$ to cancel it out in their linear combination. However, $g_k$ was picked such that it had the lowest index $m$ with negative weight across all $\{g_i\}$; it is contradictory for there to be such a $p < m$.
\end{proof}
 
 \vspace{-2ex}
\paragraph{Natural Semantics}

The values of $\stirling {n+1} {k+1}$ count the number of ways to pick $k$ non-empty disjoint subsets of $n$ elements. Many programs with exponential resource use iterate over collections of subsets, so these numbers naturally arise. 

Recall the naive solution to subset sum from the introduction. The algorithm iterates through all the subsets of numbers in the input list. When considering Fagin's descriptive complexity result that NP problems are precisely those expressible in existential second order logic \cite{Fagin}, it becomes clear that naive solutions to any NP-complete problem fit this characterization: naively brute-forcing through second order terms to find an existential witness is just iterating through tuples of subsets.

\vspace{-2ex}
\paragraph{Example}

Consider the naive solution to subset sum from the introduction. One can verify that the number of Boolean and arithmetic operations used on an input of size $n$ is $3*2^n-2$ by induction. We find the same bound here by preceding each such operation with an explicit $\mathit{tick}\{1\}$ operation. Thee AARA type system then verifies that the type of $\mathit{subsetSum}$ is $L^3(\mathbb{Z})\times \mathbb{Z} \stackrel{1/0}{\rightarrow} \mathit{bool}$.

Here is the code again, with type annotations on each line tracking the amount of $\stirling {n+1} 2$ potential on lists, and comments tracking available constant potential. For clarity, the code is re-written in a let-normal form, and sharing locations are marked. 
\vspace{-1.5ex}
\begin{lstlisting}[mathescape=true]
let subsetSum nums:$L^3(\mathbb{Z})$ target =                         	   (* 1 *)
    match nums:$L^3(\mathbb{Z})$ with
    | [] ->                                          	   (* 1 *)
    	tick 1; target = 0 		     		   (* 0 *)
    | hd::(tl:$L^6(\mathbb{Z})$) ->                                        (* 4 *)
        tick 1; let newTarget = target - hd in             (* 3 *)
        (* share tl:$L^6(\mathbb{Z})$ as $L^3(\mathbb{Z})$, $L^3(\mathbb{Z})$ *)
        let withNum = subsetSum tl:$L^3(\mathbb{Z})$ newTarget in  (* 2 *)
        let without = subsetSum tl:$L^3(\mathbb{Z})$ target in              (* 1 *)
        tick 1; withNum || without                   	   (* 0 *)
\end{lstlisting}
\vspace{-1.5ex} 

The indicated values yield witnesses for the AARA typing rules, so we know via soundness that the difference between initial and ending potential gives an upper bound on how many operations were used. That difference is $1+3*\stirling {n+1} 2 = 3*2^n-2$, where $n$ is the size of $\mathit{nums}$, exactly the amount used.

Exponential terms with higher bases than 2 can come into play with more recursive calls, like in the code below enumerating the $3^n$ ways to put $n$ labelled balls into 3 labelled bins.
\vspace{-1.5ex} 
\begin{lstlisting}[mathescape=true]
let helper xs:$L^{2,2}(\mathbb{Z})$ a b c =					       (* 1 *)
	match xs with
	| [] -> 					       (* 1 *)
		tick 1; [(a,b,c)] 			       (* 0 *)
	| hd::(tl:$L^{6,6}(\mathbb{Z})$) ->			 		       (* 3 *)
		(* share tl:$L^{6,6}(\mathbb{Z})$ as $L^{2,2}(\mathbb{Z})$, $L^{2,2}(\mathbb{Z})$, $L^{2,2}(\mathbb{Z})$ *)
		let newA = hd::a in			       (* 3 *)
		let tmp1 = helper tl:$L^{2,2}(\mathbb{Z})$ newA b c in              (* 2 *)
		let newB = hd::b in			       (* 2 *)
		let tmp2 = helper tl:$L^{2,2}(\mathbb{Z})$ a newB c in              (* 1 *)
		let newC = hd::c in			       (* 1 *) 
		let tmp3 = helper tl:$L^{2,2}(\mathbb{Z})$ a b newC in              (* 0 *)
		tmp1 @ tmp2 @ tmp3		               (* 0 *)
		
let ballBins3 xs:$L^{2,2}(\mathbb{Z})$ = 					       (* 1 *)
	helper xs:$L^{2,2}(\mathbb{Z})$ [] [] []				       (* 0 *)
\end{lstlisting}
\vspace{-1.5ex}
 
By paying a unit of resource for each such way using $\mathit tick$, we can use AARA to bound the count. It assigns a type of $L^{2,2}(\mathbb{Z}) \stackrel {1/0} {\rightarrow} L^{0,0}(L^{0,0}(\mathbb{Z})\times L^{0,0}(\mathbb{Z})\times L^{0,0}(\mathbb{Z}))$ to $\mathit ballBins3$, where the superscript tracks $\stirling {n+1} 2$ and $\stirling {n+1} 3$ potential, respectively. Since $2\stirling {n+1} 3 + 2\stirling{n+1} 2 + 1 = 3^n$, this bound is exact.

\vspace{-3ex}
\section{Mixed Potential}
\vspace{-2ex}

It is possible to combine the existing polynomial potential functions with these new exponential potential functions to not only conservatively extend both, but further represent potentials functions with their products. This space represents functions in $\Theta(n^k(b+1)^n)$ for naturals $k,b$, and does so with terms of the form $\binom n k \stirling {n+1}{b+1}$ so that $\phi(n,P) = \sum_{b,k} p_{b,k} \cdot \binom n k \stirling {n+1}{b+1}$. Note that for $k$ or $b$ equal to 0, the potential functions here reduce to the offset Stirling numbers or binomial coefficients, respectively.

The methods used to combine these potential functions here can easily be generalized to combine any two suitable sets.

\vspace{-2ex}
\paragraph{Simple Shift Operation}

It is straightforward to find a linear recurrence for these products by distributing over their linear recurrences.
\vspace{-1.5ex}
\large
\begin{align*}
    \scriptstyle\binom {n+1} {k+1} \stirling {n+2}{b+2} &\scriptstyle= (\binom n {k+1} + \binom n k) ((b+2)\stirling {n+1}{b+2} + \stirling {n+1}{b+1}) 
    \\
    &\scriptstyle= (b+2) \binom n {k+1}\stirling {n+1}{b+2} + (b+2) \binom n k\stirling {n+1}{b+2} + \binom n {k+1}\stirling {n+1}{b+1} + \binom n {k}\stirling {n+1}{b+1}
\end{align*}
\normalsize
As before, this yields a definition for $\delta$ and $\lhd$ with Equation~\ref{eq:shift}. Letting $P$ now be indexed by pairs $b,k$:
$
    \lhd p_{b,k} = (b+1)p_{b,k}  + (b+1)p_{b,k+1} + p_{b+1,k} + p_{b+1,k+1}
$, and $\delta(P) = p_{0,1}+p_{1,0}+p_{1,1}$. Noting that these definitions are linear again yields automatability for finite (2-dimensional) prefixes of the basis.

\vspace{-2ex}
\paragraph{Expressivity}

The product of non-negative, non-decreasing functions is still non-negative and non-decreasing, so products of valid potential functions are still valid. Soundness is preserved by letting $p_0$ be shorthand for the new constant function coefficient $p_{0,0}$ wherever it is used in Theorem \ref{thm: sound}. Moreover, maximality of expressivity is preserved, simply by giving index pairs the ordering relation $(i_1,i_2) \leq (j_1,j_2) \iff i_1 \leq j_1 \wedge i_2 \leq j_2$ and applying Theorem \ref{thm:max}.

\vspace{-2ex}
\paragraph{Example}

Consider bounding the number of Boolean and arithmetic operations in a variation of subset sum: \textit{single-use} subset sum. Here the input may contain duplicate numbers that should be ignored, so as to treat the input as a true set. This is a trivial change to the mathematical problem, but one that real code might have to deal with, depending on the implementation of sets. 

The code can be changed to handle this by removing all later duplicates of each number it reaches, so that later recursive calls will never see the number again. It is easy to create a function $\mathit{remove}$ of type $\mathbb{Z} \times L^{a+1,b,c}(\mathbb{Z}) \stackrel{d/d}\rightarrow L^{a,b,c}(\mathbb{Z})$ to do this for any $a,b,c,d$, where the superscript values represent linear, $\stirling {n+1}{2}$, and $n\stirling {n+1} 2$ potential, respectively. 

One can prove by induction that at most $4*2^n - n - 3$ Boolean or arithmetic operations are required. Although this can be bounded with only exponential functions, the purely exponential potential system cannot reason about the exact (linear) cost associated with $\mathit{remove}$, and overestimates the bound to be in $\theta(3^n)$. This mixed system can provide a better (though still loose) bound of $n2^n + 2*2^n - n -1$, giving a type of $L^{0,2,1}(\mathbb{Z})\times \mathbb{Z} \stackrel{1/0}\rightarrow bool$ to $\mathit subSum1$. After showing this derivation, we will show how to find the exact bound with AARA.

The following is the single-use subset sum code, with comments on each line tracking the amount of available resources on each line. For clarity, we indicate sharing and subtype-weakening locations.
\begin{lstlisting}[mathescape=true]
let subSum1 nums:$L^{0,2,1}(\mathbb{Z})$ target =                                       (* 1 *)
    match nums with
    | [] ->                                                     (* 1 *)
        tick 1; target = 0                                       (* 0 *)
    | hd::(tl:$L^{1,6,2}(\mathbb{Z}))$ ->                                               (* 4 *)
        let otherNums:$L^{0,6,2}(\mathbb{Z})$ = remove hd tl:$L^{1,6,2}(\mathbb{Z})$ in                        (* 4 *)
        tick 1; let newTarg = target - hd in 	                 (* 3 *)
        (* weaken otherNums:$L^{0,6,2}(\mathbb{Z})$ to $L^{0,4,2}(\mathbb{Z})$ *)
        (* share otherNums:$L^{0,4,2}(\mathbb{Z})$ as $L^{0,2,1}(\mathbb{Z})$, $L^{0,2,1}(\mathbb{Z})$ *)
        let withNum = subSum1 otherNums:$L^{0,2,1}(\mathbb{Z})$ newTarg in (* 2 *)
        let without = subSum1 otherNums:$L^{0,2,1}(\mathbb{Z})$ target in               (* 1 *)
        tick 1; withNum || without 			         (* 0 *)
\end{lstlisting}

The difference between initial and ending potential gives the upper bound of $1+2\stirling {n+1} 2 + n*\stirling{n+1}2 = n2^n + 2*2^n - n -1$ Boolean or arithmetic operations.

Note that we use the subtype-weakening rule, throwing away 2 units of $\stirling {n+1} 2$ potential. This indicates why the bound is not tight. Next we show how to improve this bound using potential demotion.

\vspace{-2ex}
\paragraph{Demotion}

There is one special exception to the non-negativity of potential annotations that may be added due to the particular nature of the relation between binomial coefficients and Stirling numbers. It represents the concept of \textit{demoting} exponential potential into polynomial potential. 

The relevant relation is $\stirling {n+1} 2 = 2^n-1 = \sum_{i=1}^\infty \binom n i \geq \sum_{i=1}^k \binom n i$. This allows a unit of $\stirling {n+1} 2$ potential to account for one unit \textit{each} of all non-constant binomial coefficient potentials. We can express this with the following additional subtyping rule. In this rule we interpret the 2-dimensional indexing of the potential annotation as a matrix, and we let $\overrightarrow p$ refer to the vector of potential entries at index coordinates ${0,i}$ for $i \geq 1$.
\vspace{-1ex}
\fontsize{9}{10}
\begin{align*}
\infer[\mathit{Demote}]
    {
        L^P(A) <: L^Q(A)
    }{
        P = R + \begin{bmatrix} 0 & \overrightarrow{p} \\ r & 0 \end{bmatrix}
        &
        Q = R + \begin{bmatrix} 0& \overrightarrow{p}+s* \overrightarrow{1} \\ r-s & 0\end{bmatrix}
    }
\end{align*}
\normalsize
\begin{theorem}
    The demotion rule is sound.
\end{theorem}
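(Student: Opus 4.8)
The statement to prove is that the type system extended with the $\mathit{Demote}$ subtyping rule still satisfies Theorem~\ref{thm: sound}. Inspecting that theorem's proof, the subtyping judgment enters only in the cases for $\mathit{SubWeakL}$ and $\mathit{SubWeakR}$, and there only through two facts: that $C <: D$ entails $\Phi(v : C) \ge \Phi(v : D)$ for every value $v$, and that potential is non-negative. So the plan is to establish, for every instance $L^P(A) <: L^Q(A)$ derivable by $\mathit{Demote}$: \textbf{(i)} $\Phi(l : L^P(A)) \ge \Phi(l : L^Q(A))$ for every list value $l$; and \textbf{(ii)} the annotation $Q$, and every annotation $\lhd^{j} Q$ reachable from it by shifting (needed because a $\mathit{ListMatch}$ on an $L^Q(A)$-typed value produces an $L^{\lhd Q}(A)$-typed tail and $\Phi$ unfolds recursively through $\lhd$), still denotes a non-negative potential function. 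Given (i) and (ii), every use of subtyping or of non-negativity of potential in the proof of Theorem~\ref{thm: sound} goes through verbatim, which is exactly the soundness claim.

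Part (i) is a short computation. By Lemma~\ref{thm: closed pot}, $\Phi(l : L^P(A)) = \phi(n, P) + \sum_{i=1}^{n}\Phi(a_i : A)$ and likewise for $Q$; the element sums coincide, so the difference equals $\phi(n, P) - \phi(n, Q) = \phi(n, P - Q)$ by linearity of $\phi$ in the annotation. From the shape of the rule, $P - Q$ has the value $s$ at coordinate $(1,0)$ and $-s$ at every coordinate $(0, i)$ with $i \ge 1$, and is zero elsewhere, so $\phi(n, P - Q) = s\big(\stirling{n+1}{2} - \sum_{i \ge 1}\binom{n}{i}\big)$ where the sum runs over the finitely many binomial indices in the working basis. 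The identity $\stirling{n+1}{2} = 2^{n} - 1 = \sum_{i \ge 1}\binom{n}{i}$ rewrites the bracket as $\sum_{i > K}\binom{n}{i} \ge 0$, so the difference is non-negative once $s \ge 0$ --- the orientation under which the rule is a sound subtyping rather than its unsound converse. Conceptually, $\mathit{Demote}$ is just the ordinary $\mathit{Subtype}$ rule transported through the linear embedding of the working basis into the infinite basis of all binomial coefficients, in which a unit of $\stirling{n+1}{2}$ potential \emph{is} a unit of each $\binom{n}{i}$.

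Part (ii) is where the real work lies, and I expect it to be the main obstacle, because $\mathit{Demote}$ is precisely the place where an annotation entry --- here $q_{1,0} = r - s$ --- is allowed to become negative, so non-negativity of the denoted potential no longer follows from non-negativity of the entries. The plan is to write $\phi(n, Q) = \phi(n, P) - s\sum_{i > K}\binom{n}{i}$ and dominate the subtracted term by the part of $\phi(n, P)$ available to absorb it --- chiefly the $\stirling{n+1}{2}$-contribution $r\stirling{n+1}{2} = r\sum_{i \ge 1}\binom{n}{i}$ --- using the side condition the rule places on $r$ and $s$; in the virtual all-binomials basis this is just the statement that the demoted annotation still has non-negative coordinates in the tail $i > K$, and hence denotes a non-negative, non-decreasing function. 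It then remains to check closure under $\lhd$: demotion leaves $\delta$ unchanged ($\delta(Q) = \delta(P)$ since $q_{1,1} = p_{1,1}$ and $q_{0,1} + q_{1,0} = p_{0,1} + p_{1,0}$), $\lhd$ acts by a non-negative linear recurrence on the coordinates it mixes, and iterating Equation~\ref{eq:shift} gives $\phi(m, \lhd^{j} Q) = \phi(m + j, Q)$ minus constant-difference terms, so non-negativity and monotonicity of $\phi(\cdot, Q)$ propagate to every shifted annotation. Assembling (i) and (ii) yields the theorem.
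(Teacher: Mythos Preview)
Your part~(i) is exactly the paper's argument: reduce to $\phi(n,P)-\phi(n,Q)$ via Lemma~\ref{thm: closed pot}, expand using $\stirling{n+1}{2}=\sum_{i\geq 1}\binom{n}{i}$, and observe that what remains is $s\sum_{i>K}\binom{n}{i}\geq 0$ (with the implicit side condition $s\geq 0$, which you make explicit and the paper leaves tacit). The paper also assumes $R=0$ without loss of generality and carries out essentially the same chain of equalities and the single inequality step.

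Where you differ is part~(ii). The paper's proof stops after establishing $\Phi(v:L^P(A))\geq\Phi(v:L^Q(A))$ and simply asserts that ``the rest of soundness then follows as in Theorem~\ref{thm: sound}.'' It does not, inside the proof, verify that the potential denoted by $Q$ (or by its shifts) remains non-negative; the relaxed well-formedness condition $p_{0,i}+p_{1,0}\geq 0$ is introduced only afterwards, as a corollary-style remark, and is not proved to suffice. Your observation that Theorem~\ref{thm: sound} genuinely uses $\Phi\geq 0$ (in the base cases and in weakening) is correct, so your extra step is not redundant. Your treatment of closure under $\lhd$ is also sound: the telescoping of Equation~\ref{eq:shift} gives $\phi(m,\lhd^{j}Q)=\phi(m+j,Q)-\phi(j,Q)$, so non-negativity and monotonicity of $\phi(\cdot,Q)$ indeed propagate to every shift. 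In short, your decomposition is the same as the paper's for the core inequality, but you close a gap the paper leaves to an informal post-proof remark; the cost is a longer argument, the benefit is that soundness actually goes through without appeal to unstated side conditions.
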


\begin{proof}
    We need only show that $C <: D$ implies $\Phi(v:D) \leq \Phi(v:C)$ for unchanged values $v$. The rest of soundness then follows as in Theorem $\ref{thm: sound}$. To do so, it is sufficient to show for $l = [a_1,\dots,a_n]$ we have $\Phi(a:L^{Q}(A)) \leq \Phi(a:L^{P}(A))$. 
    
    Without loss of generality, we need only consider where $R=0$.
   \begin{align*}
       \Phi(l :L^{Q}(A)) =& \phi(n, Q) + {\scriptstyle\sum}_{i=1}^n \Phi(a_i:A)
       \\
       =& (r-s){\scriptstyle\stirling {n+1} 2} + {\scriptstyle\sum}_{i=1}^k (\overrightarrow{p}_{i-1} +s) {\scriptstyle\binom n i} + {\scriptstyle\sum}_{i=1}^n \Phi(a_i:A)
   \end{align*}
   \begin{align*}
       =&{\scriptstyle\sum}_{i=1}^\infty(r-s){\scriptstyle\binom n i} + {\scriptstyle\sum}_{i=1}^k (\overrightarrow{p}_{i-1} +s) {\scriptstyle\binom n i} +{\scriptstyle\sum}_{i=1}^n \Phi(a_i:A)
\\
       \leq& {\scriptstyle\sum}_{i=1}^\infty r{\scriptstyle\binom n i} + {\scriptstyle\sum}_{i=1}^k \overrightarrow{p}_{i-1} {\scriptstyle\binom n i} + {\scriptstyle\sum}_{i=1}^n \Phi(a_i:A)
       \\
       =&  r{\scriptstyle\stirling {n+1} 2} + {\scriptstyle\sum}_{i=1}^k \overrightarrow{p}_{i-1} {\scriptstyle\binom n i }+ {\scriptstyle\sum}_{i=1}^n \Phi(a_i:A)
       \\
       =& \phi(n, P) + {\scriptstyle\sum}_{i=1}^n \Phi(a_i:A) = \Phi(l :L^{P}(A))
   \end{align*}
\end{proof}

As a corollary, this allows us to loosen the constraint that every annotation $P$ contains only non-negative rationals. In particular, it is no longer required that $\forall i. p_{0,i} \geq 0$. Instead, we require that $\forall i. p_{0,i} + p_{1,0} \geq 0$. Each unit of $\stirling {n+1} 2$ potential may ``pay" for one unit of deficit from each polynomial potential function. Because this is still a linear constraint, type inference remains automatable. 

Using $\mathit{Demote}$, tighter bounds can be obtained. Consider the single-use subset sum solution from the previous section. Here it is again below, but this time allowing the linear potential to be paid for by $\stirling {n+1} 2$ potential. AARA can now provide a type of $L^{-1,4,0}(\mathbb{Z})\times \mathbb{Z} \stackrel{1/0}\rightarrow bool$ for $\mathit subSum1$, corresponding to the exact upper bound of $4*2^n - n - 3$ operations. This time $n*\stirling {n+1} 2$ is elided in the annotated potentials, as it is not needed.

\begin{lstlisting}[mathescape=true]
let subSum1 nums:$L^{-1,4}(\mathbb{Z})$ target =                                       (* 1 *)
    match nums with
    | [] ->                                                     (* 1 *)
        tick 1; target = 0  	                                 (* 0 *)
    | hd::(tl:$L^{-1,8}(\mathbb{Z})$) ->                                              (* 4 *)
        let otherNums:$L^{-2,8}(\mathbb{Z})$ = remove hd tl:$L^{-1,8}(\mathbb{Z})$ in                        (* 4 *)
        tick 1; let newTarg = target - hd in 	                 (* 3 *)
        (* share otherNums:$L^{-2,8}(\mathbb{Z})$ as $L^{-1,4}(\mathbb{Z})$, $L^{-1,4}(\mathbb{Z})$ *)
        let withNum = subSum1 otherNums:$L^{-1,4}(\mathbb{Z})$ newTarg in (* 2 *)
        let without = subSum1 otherNums:$L^{-1,4}(\mathbb{Z})$ target in               (* 1 *)
        tick 1; withNum || without  	      		         (* 0 *)
\end{lstlisting}

The difference between initial and ending potential gives the upper bound of $1-n +4\stirling {n+1} 2 = 4*2^n - n - 3$, as desired.

\vspace{-1ex}
\section{Exponentials, Polynomials, and Logarithms}
\label{sec:exppoly}
\vspace{-1ex}

The addition of exponential potential also allows for the inference of previously nonderivable polynomial-resource types for certain programs. One such way this can happen is by compacting the potential of a list into a new list logarithmic in size to the first. Performing exponential-cost operations, such as $\mathit{subsetSum}$, on a list of logarithmic size only has linear cost in total. 

In the code below, $\mathit{log}$ takes a list $x$ of length $n$ and returns a list of length roughly $log_2 (n)$. If $x$ begins with one unit of linear potential, the type system assigns the output of $\mathit{log}$ one unit of base-2 exponential ($2^n-1$) potential. We show in the code below with types of the form $L^{a,b}$, where $a$ is the linear potential, and $b$ is the base-2 exponential potential. This lets us find that $\mathit{half}$ can have type $L^{1,0}(\mathbb{Z}) \stackrel {0/0} \rightarrow L^{2,0}(\mathbb{Z})$ and $\mathit log$ has type $L^{1,0}(\mathbb{Z}) \stackrel {0/0} \rightarrow L^{0,1}(\mathbb{Z})$. The typing of $\mathit log$ shows the conversion from linear to exponential potential.

\begin{lstlisting}[mathescape=true]
let half x$:L^{1,0}(\mathbb{Z})$ =                                       (* 0 *)
    match x with 
    | [] ->                                       (* 0 *)
        []$:L^{2,0}(\mathbb{Z})$                                         (* 0 *)
    | hd::(tl$:L^{1,0}(\mathbb{Z})$) ->                                 (* 1 *)
        match tl with
        | [] ->                                   (* 1 *)
            []$:L^{2,0}(\mathbb{Z})$                                     (* 1 *)
        | hd2::(tl2$:L^{1,0}(\mathbb{Z})$) ->                           (* 2 *)
	    let halfTail$:L^{2,0}(\mathbb{Z})$  = half tl2 in            (* 2 *)
	    (hd::halfTail)$:L^{2,0}(\mathbb{Z})$           	 	   (* 0 *)
	            
let log x$:L^{1,0}(\mathbb{Z})$ =                                        (* 0 *)
    match x with
    | [] ->                                       (* 0 *)
        []$:L^{0,1}(\mathbb{Z})$                                         (* 0 *)
    | hd::(tl$:L^{1,0}(\mathbb{Z})$) ->                                 (* 1 *)
        let halfTail$:L^{2,0}(\mathbb{Z})$ = half tl in                  (* 1 *)
        let subSoln$:L^{0,2}(\mathbb{Z})$ = log halfTail in              (* 1 *)
        (hd::subSoln)$:L^{0,1}(\mathbb{Z})$                   	   (* 0 *)
\end{lstlisting}

Typing $\mathit log$ above requires resource-polymorphic recursion. However, this can be justified by noting that the above can be thought of to show $\mathit{half}$ has type $L^{a,0}(\mathbb{Z}) \stackrel {0/0} \rightarrow L^{2a,0}(\mathbb{Z})$ and $\mathit log$ has type $L^{a,0}(\mathbb{Z}) \stackrel {0/0} \rightarrow L^{0,a}(\mathbb{Z})$ for any $a \geq 0$. 

Coincidentally, $\mathit log$ conversion of linear to exponential potential certifies that the output list's size can be bounded by a logarithm of the input's size. Nonetheless, logarithmic \textit{potential} is not directly compatible with the approach this work takes. Sublinear functions have negative second derivatives, and this yields negative annotation entries under $\lhd$ applications. This may not be insurmountable, as the demotion rule showed here, but new ideas are needed overall. Logarithmic potential has been explored in \cite{moserlogs}, though the approach there departs from the automatable AARA framework of linear constraint solving. 

\vspace{-3ex}
\section{Conclusion and Future Work}
\vspace{-2ex}

Using Stirling numbers of the second kind allows for the automated inference of exponential resource usages via Automatic Amortized Resource Analysis. This may be combined with the existing polynomial system, allowing mixtures of polynomial and exponential functions to be inferred. Under this system, more kinds of programs can now be automatically analyzed, in particular those making use of multiple recursive calls, or logarithmically-sized lists. Finally, the framework put in place to accomplish this separates the concerns of the type system and potential functions, paving the way to allow modular addition of different potential functions. Future work could extend the work here to cover additional language features supported in polynomial AARA literature, like trees \cite{Hoffmann11}. 

\bibliography{publications,lit,otherSources}

\ifshort{

%%%%% To display Open Access text and logo, Please add below text and copy the cc_by_4-0.eps in the manuscript package %%%

\vfill

{\small\medskip\noindent{\bf Open Access} This chapter is licensed under the terms of the Creative Commons\break Attribution 4.0 International License (\url{http://creativecommons.org/licenses/by/4.0/}), which permits use, sharing, adaptation, distribution and reproduction in any medium or format, as long as you give appropriate credit to the original author(s) and the source, provide a link to the Creative Commons license and indicate if changes were made.}

{\small \spaceskip .28em plus .1em minus .1em The images or other third party material in this chapter are included in the chapter's Creative Commons license, unless indicated otherwise in a credit line to the material.~If material is not included in the chapter's Creative Commons license and your intended\break use is not permitted by statutory regulation or exceeds the permitted use, you will need to obtain permission directly from the copyright holder.}

\medskip\noindent\includegraphics{cc_by_4-0.eps}

}{}

\end{document}

%%% Local Variables:
%%% mode: latex
%%% TeX-master: t
%%% End: